\newtheorem{mytheorem}{Theorem}[section]
\newtheorem{mydefinition}[mytheorem]{Definition}
\newtheorem{mylemma}[mytheorem]{Lemma}
\title{A Separation Logic to Verify Termination of Busy-Waiting for Abrupt Program Exit}
\author{Tobias Reinhard}
\affiliation{
    \institution{KU Leuven\\imec-DistriNet Research Group}
}
\email{tobias.reinhard@kuleuven.be}
\author{Amin Timany}
\affiliation{
    \institution{Aarhus University\\Logic and Semantics Group}
}
\email{timany@cs.au.dk}
\author{Bart Jacobs}
\affiliation{
    \institution{KU Leuven\\imec-DistriNet Research Group}
}
\email{bart.jacobs@kuleuven.be}
\begin{document}

    \NewDocumentCommand{\RRedRuleNameSet}{}{\ensuremath{N_\keywordFont{r}}\xspace}
    \NewDocumentCommand{\progOrdGraph}{}{\ensuremath{\mathcal{G}}\xspace}
    \NewDocumentCommand{\progOrdGraphConfig}{}{\ensuremath{g}\xspace}
    \NewDocumentCommand{\progOrdGraphLF}{}
            {\ensuremath{\progOrdGraph_{\fixedPredNameFont{lf}}}\xspace}
    \NewDocumentCommand{\nodesOf}{m}
            {\ensuremath{\fixedFuncNameFont{nodes}(#1)}\xspace}
    \NewDocumentCommand{\edgesOf}{m}
            {\ensuremath{\fixedFuncNameFont{edges}(#1)}\xspace}
    \NewDocumentCommand{\leavesOf}{m}
            {\ensuremath{\fixedFuncNameFont{leaves}(#1)}\xspace}

    \NewDocumentCommand{\sumOb}{m}{\ensuremath{S_o(#1)}\xspace}
    \NewDocumentCommand{\sumCred}{m}{\ensuremath{S_c(#1)}\xspace}

\begin{abstract}
    Programs for multiprocessor machines commonly perform busy-waiting for synchronisation. 
    In this paper, we make a first step towards proving termination of such programs. 
    We approximate 
    (i)~arbitrary waitable events by abrupt program termination and 
    (ii)~busy-waiting for events by busy-waiting to be abruptly terminated.

    We propose a separation logic for modularly verifying termination (under fair scheduling) of programs where some threads eventually abruptly terminate the program, and other threads busy-wait for this to happen.
\end{abstract}

\maketitle

\section{Introduction}\label{sec:Introduction}

    Programs for multiprocessor machines commonly perform busy-waiting for synchronisation~\cite{Mhlemann1980MethodFR, MellorCrummey1991AlgorithmsFS, Rahman2012ProcessSI}. 
    In this paper, we make a first step towards proving termination of such programs. 
    Specifically, we propose a separation logic~\cite{Reynolds2002SeparationLA, OHearn2001LocalRA} for modularly verifying termination (under fair scheduling) of programs where some threads eventually abruptly terminate the program, and others busy-wait for this to happen.

    Here, by modular we mean that we reason about each thread and each function in isolation.
    That is, we do not reason about thread scheduling or interleavings.
    We only consider these issues when proving the soundness of our logic.
    
    In this work, we approximate 
    (i)~arbitrary events that a program might wait for by abrupt termination and 
    (ii)~busy-waiting for events by busy-waiting to be abruptly terminated. 
    In Section~\ref{sec:FutureWork}, we sketch preliminary ideas for generalizing this to verifying termination of busy-waiting for arbitrary events, and how this work may also be directly relevant to verifying liveness properties of a program’s I/O behaviour.
    
    Throughout this paper we use a very simple programming language to illustrate our verification approach.
    Its simplicity would allow us to verify termination of busy-waiting for abrupt termination via a static analysis significantly simpler than the proposed separation logic.
    However, in contrast to such an analysis, our approach is also applicable to realistic languages.
    We are confident that the logic we propose can be combined straightforwardly with existing concurrent separation logics like Iris~\cite{Jung2018IrisFT} to verify termination of realistic programs where threads busy-wait for abrupt program termination.

    We start by introducing the programming language in Section~\ref{sec:Language} and continue in Section~\ref{sec:Logic} with presenting the separation logic and so-called \emph{obligations} and \emph{credits}~\cite{Hamin2019TransferringOT, Hamin2018DeadlockFreeM, Leino2010DeadlockFreeCA, Kobayashi2006ANT}, which we use to reason about termination of busy-waiting.
    In Section \ref{sec:ProofSystem} we present our verification approach in the form of a set of proof rules and illustrate their application.
    Afterwards, we sketch the soundness proof of our proof system in Section~\ref{sec:Soundness}.
    We conclude by outlining our plans for future work, comparing our approach to related work and reflecting on our approach in Sections~\ref{sec:FutureWork}, \ref{sec:RelatedWork} and ~\ref{sec:Conclusion}.

\section{The Language}\label{sec:Language}
    We consider a simple programming language with an \cmdExit command that abruptly terminates all running threads, a \cmdFork command, a looping construct \cmdLoop[\cmdSkip] to express infinite busy-waiting loops and sequencing $\cmdVar_1; \cmdVar_2$.
    \begin{mydefinition}[Commands and Continuations]
        We denote the sets of commands \cmdVar and continuations \contVar as defined by the grammar presented in Figure~\ref{fig:Syntax} by~\CmdSet and~\ContSet.
        We consider sequencing $\cdot\mathop{;}\cdot$ as defined in the grammars of commands and continuations to be right-associative.
    \end{mydefinition}
    
    We use commands and continuations to represent programs and single threads, respectively, as well as natural numbers for thread IDs.
    Continuation \contDone marks the end of a thread's execution.
    We consider thread pools to be functions mapping a finite set of thread IDs to continuations.
        
    \begin{figure}
        $$
        \begin{array}{r c l}
            \cmdVar \in \CmdSet
                    &::=&
                            \cmdExit ~|~
                            \cmdLoop[\cmdSkip] ~|~
                            \cmdFork[\cmdVar] ~|~
                            \cmdVar\ ;\, \cmdVar
            \\
            \contVar \in \ContSet
                    &::=&
                            \contDone   ~|~   \cmdVar\ ;\, \contVar
        \end{array}    
        $$
        \vspace{-0.5cm}
        \caption{Syntax}
        \label{fig:Syntax}
    \end{figure}
    
    \begin{mydefinition}[Thread Pools]
            We define the set of thread pools \ThreadPoolSet as follows
            $$
                \ThreadPoolSet\ := \
                        \{  \tpVar : \tpDomVar \rightarrow \ContSet  
                                \, \ | \, \
                                \tpDomVar \finSubset \N
                        \}.
            $$
            We denote thread pools by \tpVar, thread IDs by \tidVar and the empty thread pool by 
            $\tpEmpty: \emptyset\rightarrow\ContSet$.
        \end{mydefinition}

    \begin{mydefinition}[Thread Pool Extension]\label{def:ThreadPoolExtension}
        Let $\tpVar:\tpDomVar\rightarrow\ContSet \in \ThreadPoolSet$ be a thread pool. We define:
        \begin{itemize}
            \item $\tpVar \tpExt \emptyset := \tpVar$,
            \item 
                    $\tpVar \tpExt \setOf{\contVar} :
                        \tpDomVar\,\cup\,\setOf{\max(\tpDomVar)+1}\rightarrow\ContSet$ 
                    with\\ 
                    $(\tpVar\tpExt\setOf{\contVar})(\tidVar) = P(\tidVar)$ for all $\tidVar\in\tpDomVar$ and \\
             $(\tpVar\tpExt\setOf{\contVar})(\max(\tpDomVar)+1) = \contVar$ ,
            \item 
                    $\tpVar \tpRem \tidVar' : 
                        \tpDomVar \setminus \setOf{\tidVar'}\rightarrow \ContSet$
                    with\\
                    $(\tpVar \tpRem \tidVar')(\tidVar) = \tpVar(\tidVar)$
        \end{itemize}
    \end{mydefinition}
    
    We consider a standard small-step operational semantics for our language defined in terms of two reduction relations: (i)~\stRedStepSymb for single thread reduction steps and (ii)~\tpRedStepSymb for thread pool reduction steps.
    
    \begin{mydefinition}[Single-Thread Reduction Relation]
        We define a \emph{single-thread reduction relation} \stRedStepSymb according to the rules presented in Figure~\ref{fig:SingleThreadReductionRelation}.
        A reduction step has the form
        $$\stRedStep{\contVar}{\nextContVar}[\ftsVar]$$
        for a set of forked threads $\ftsVar \subset \ContSet$ with $\cardinalityOf{\ftsVar} \leq 1$.
    \end{mydefinition}

    \begin{figure}
        \begin{mathpar}
            \RedSTLoop
            \and
            \RedSTFork
            \and
            \RedSTSeq
        \end{mathpar}
        \vspace{-0.5cm}
        \caption{Reduction rules for single threads.}
        \label{fig:SingleThreadReductionRelation}
    \end{figure}

    \begin{mydefinition}[Thread Pool Reduction Relation]
        We define a \emph{thread pool reduction relation} \tpRedStepSymb according to the rules presented in Figure~\ref{fig:ThreadPoolReductionRelation}. 
        A reduction step has the form
        $$\tpRedStep{\tpVar}{\tidVar}{\nextTpVar}$$
        for a thread ID $\tidVar \in \domOf{\tpVar}$.
    \end{mydefinition}

    \begin{figure}
        \begin{mathpar}
            \RedTPLift
            \and
            \RedTPExit
            \and
            \RedTPThreadTerm
        \end{mathpar}
        \vspace{-0.3cm}
        \caption{Reduction rules for thread pools.}
        \label{fig:ThreadPoolReductionRelation}
    \end{figure}


    Figure~\ref{fig:ExampleProgramCode} illustrates the type of programs we aim to verify.
    The code snippet spawns a new thread which will abruptly terminate the entire program and then busy-waits for the program to be terminated.
    The operational semantics defined above is non-deterministic in regard to when and if threads are scheduled.
    Meanwhile, the presented program only terminates if the exiting thread is eventually scheduled.
    Hence, we need to assume fair scheduling.

    \begin{figure}
            $$
                    \cmdFork[\cmdExit];
                    \cmdLoop[\cmdSkip]
            \vspace{-0.2cm}
            $$
            \caption{Example program with two threads: An exiting thread and one waiting for the program to be abruptly terminated.}
            \label{fig:ExampleProgramCode}
            \vspace{-0.19cm}
    \end{figure}

    \begin{mydefinition}[Reduction Sequence]\label{def:ReductionSequence}
        Let  \tpRSeq{\N} be a sequence of thread pools such that \tpRedStep{\tpVar_i}{\tidVar_i}{\tpVar_{i+1}} holds for all $i\in \N$ and some sequence $(\tidVar_i)_{i\in \N}$ of thread IDs. Then we call \tpRSeq{\N} a \emph{reduction sequence}.
        
    \end{mydefinition}

    Note that according to this definition, reduction sequences are implicitly infinite.

    \begin{mydefinition}[Fairness]\label{def:FairInfiniteReduction}
        We call a reduction sequence \phantom{} \tpRSeq{\N} \emph{fair} iff for all $k \in \N$ and $\tidVar\in\domOf{\tpVar_k}$ 
         there exists $j \geq k$ such that 
        $$
            \tpRedStep{\tpVar_j}{\tidVar}{\tpVar_{j+1}}.
        $$
    \end{mydefinition}

\section{The Logic}\label{sec:Logic}
    
    In this paper, we develop a separation logic to reason about termination of busy-waiting programs.
    Separation logic is designed for reasoning about program resources as well as ghost resources~\cite{Reynolds2002SeparationLA, OHearn2001LocalRA}.
    The latter is information attached to program executions for the purpose of program verification, e.g., a resource tracking how many threads have access to a shared memory location~\cite{Jung2016HigherorderGS}.
    Here, we use ghost resources to track which thread will eventually \cmdExit, i.e., abruptly terminate the entire program.
    
    \paragraph{Obligations \& Credits}
    Remember that \cmdExit terminates all running threads.
    Therefore, in order to modularly reason about program termination we need information about other threads performing \cmdExit.
    
    For this purpose, we introduce two kinds of ghost resources: \emph{obligations} and \emph{credits}.
    Threads holding an obligation are required to perform \cmdExit while threads holding a credit are allowed to busy-wait for another thread to \cmdExit.
    As seen in the next section we ensure that no thread (directly or indirectly) waits for itself.

    We aggregate obligations into obligations chunks, where each obligations chunk collects the held obligations of a single thread.

    \paragraph{Assertions}
    The language of assertions defined in the following allows us to express knowledge and assumptions about held obligations and credits.
    The language contains the standard separating conjunction $\cdot \slStar \cdot$ as well as two non-standard predicates \obsPred and \credit to express the possession of ghost resources.
    (i)~\obs{n} expresses the possession of one obligations chunk containing $n$ \cmdExit obligations; i.e., it expresses that the current thread holds $n$ exit obligations~\footnote{
            As outlined in Section~\ref{sec:FutureWork}, we plan to extend this logic to one where threads are obliged to set ghost signals.
            This makes it necessary to track how many signals remain to be set.
            Hence, we track the number of obligations.
        } .
    (ii)~\credit expresses the possession of an \cmdExit credit that can be used to busy-wait for another thread to \cmdExit.

    \begin{mydefinition}[Assertions]
        Figure~\ref{fig:AssertionsSyntax} defines the set of assertions \AssertionSet.
    \end{mydefinition}
    
    \begin{figure}
        $$
        \begin{array}{r c l} 
            \assVar \in \AssertionSet &::= 
                    &\slTrue ~|~ \slFalse ~|~ \assVar \slStar \assVar  ~|~ 
                    \obs{\natVar} ~|~ \credit\\
            \natVar \in \N
        \end{array}
        $$
        \vspace{-0.5cm}
        
        \caption{Syntax of assertions.}
        \label{fig:AssertionsSyntax}
    \end{figure}

    As we see in Section~\ref{sec:ProofSystem} it is crucial to our verification approach that the \obsPred-predicate captures a full obligations chunk and that this chunk can only be split when obligations are passed to a newly forked thread.
    We represent the information about the held obligations chunks and credits by resource bundles $(\obBagVar, \credVar)$.

    \begin{mydefinition}[Resource Bundles]
        We define the set of \emph{resource bundles} \ResourceBundleSet as
        $$\ResourceBundleSet\ \ :=\ \ \BagsOf{\N} \times \N.$$
        For
        $(\obBagVar_1, \credVar_1), (\obBagVar_2, \credVar_2) \in \ResourceBundleSet$
        we define
        $$
                \begin{array}{l c l}
                        (\obBagVar_1, \credVar_1)\ \tupleCup\
                        (\obBagVar_2, \credVar_2)
                        &:=
                        &(\obBagVar_1 \msCup \obBagVar_2, 
                            \credVar_1 + \credVar_2),
                 \end{array}
        $$
    \end{mydefinition}

    Threads hold exactly one obligations chunk, i.e., resources $(\obBagVar, \credVar)$ with $\cardinalityOf{\obBagVar} = 1$.
    We call such resource bundles \emph{\completeTerm}.
    
    \begin{mydefinition}[Complete Resource Bundles]
        We call a resource bundle $(\obBagVar, \credVar) \in \ResourceBundleSet$ \emph{\completeTerm} if $\cardinalityOf{\obBagVar} = 1$ holds and write \complete{(\obBagVar, \credVar)}.
    \end{mydefinition}
   
    Note that the following definition indeed ensures that the \obsPred-predicate captures a full obligations-chunk.
    Hence, no bundle with one obligations chunk can satisfy an assertion of the form $\obs{n} \slStar \obs{n'}$.     

    \begin{mydefinition}[Assertion Model Relation]
        Figure~\ref{fig:AssertionModelRelation} defines the \emph{assertion model relation} $\assModelsSymb\ \subseteq\ \ResourceBundleSet\, \times\, \AssertionSet$.
        We write
        $$ \assModels{\rbVar}{\assVar}$$
        to express that resource bundle $\rbVar\in \ResourceBundleSet$ models assertion $\assVar \in \AssertionSet$.
    \end{mydefinition}

    \begin{figure}

        $$
        \begin{array}{r c l c l}
            \rbVar &\assModelsSymb &\slTrue
            \\
            \rbVar &\assModelsSymb &\assVar_1 \slStar \assVar_2 
                    &\text{iff}
                    &\exists \rbVar_1, \rbVar_2\in \ResourceBundleSet.\ 
                            \rbVar = \rbVar_1\tupleCup \rbVar_2\\
                    &&&&
                            \wedge\
                            \rbVar_1 \assModelsSymb \assVar_1\ \wedge\ 
                            \rbVar_2 \assModelsSymb \assVar_2
            \\
            (\obBagVar, \credVar) &\assModelsSymb &\obs{\obVar}
                    &\text{iff}
                    &\obVar \in \obBagVar
            \\
            (\obBagVar, \credVar) &\assModelsSymb &\credit
                    &\text{iff}
                    &\credVar \geq 1
        \end{array}
        $$
        \vspace{-0.3cm}
        \caption{Modeling relation for assertions.}
        \label{fig:AssertionModelRelation}
    \end{figure}

\section{Verifying Termination of Busy-Waiting}\label{sec:ProofSystem}
    In this section we present the proof system we propose for verifying termination of programs with busy-waiting for abrupt program exit and illustrate its application.
   Further, we present a soundness theorem stating that every program, which provably discharges all its \cmdExit obligations and starts without credits, terminates.

    \paragraph{Hoare Triples}
    We use Hoare triples \hoareTriple{\htPreConVar}{\cmdVar}{\htPostConVar}~\cite{Hoare1968HoareLogic} to specify the behaviour of programs.
    Such a triple expresses that given precondition \htPreConVar, command \cmdVar can be reduced without getting stuck and if this reduction terminates, then postcondition \htPostConVar holds afterwards.
    In particular, a triple \hoareTriple{\htPreConVar}{\cmdVar}{\slFalse} expresses that \cmdVar does not terminate \emph{normally}, i.e., it either diverges or exits abruptly.

    \paragraph{Ghost Steps}
    When verifying the termination of a program \cmdVar, we consider it to start without any obligations or credits, i.e., \hoareTriple{\noObs}{\cmdVar}{\htPostConVar}.
    Obligation-credit pairs can, however, be generated during so-called \emph{ghost steps}.
    These are steps that exclusively exist on the verification level and only affect ghost resources, but not the program's behaviour~\cite{Jung2018IrisFT, Fillitre2016TheSO}.
    A credit can also be cancelled against an obligation.
    
    \paragraph{View Shift}
    In our proofs, we need to capture ghost steps as well as drawing conclusions from assertions, e.g., rewriting $\htPreConVar \slStar \htPostConVar$ into $\htPostConVar \slStar \htPreConVar$ and concluding \noObs from the assumption \slFalse.
    We ensure this by introducing a view shift relation \viewShiftSymb~\cite{Jung2018IrisFT}.
    A view shift \viewShift{\htPreConVar}{\htPostConVar} expresses that whenever \htPreConVar holds, then either 
    (i)~\htPostConVar also holds or
    (ii)~\htPostConVar can be established by performing ghost steps.
    \lrViewShift{\htPreConVar}{\htPostConVar} stands for
    $\viewShift{\htPreConVar}{\htPostConVar}\, \wedge\,
    \viewShift{\htPostConVar}{\htPreConVar}$.

    \begin{mydefinition}[View Shift]\label{def:ViewShiftRelation}
        We define the view shift relation $\viewShiftSymb\ \subset \AssertionSet \times \AssertionSet$ according to the rules presented in Figure~\ref{fig:ViewShiftRelation}.
    \end{mydefinition}
        
    \begin{figure}
            \begin{mathpar}
                \VSObCredIntro
                \and
                \VSSemImp
                \and
                \VSTrans
            \end{mathpar}
            
            \vspace{-0.3cm}
            \caption{View shift rules.}
            \label{fig:ViewShiftRelation}
    \end{figure}

    Note that view shifts only allow to spawn or remove obligations and credits simultaneously.
    This way, we ensure that the number of obligations and credits in the system remains equal at any time (provided this also holds for the program's initial state).

    \paragraph{Proof Rules}
    We verify program specifications \hoareTriple{\htPreConVar}{\cmdVar}{\htPostConVar}
    via a proof relation \htProvesSymb defined by a set of proof rules.
    These rules are designed to prove that every command \cmdVar, which provably discharges its obligations, i.e.,
    \htProves{\obs{n}}{\cmdVar}{\noObs},
    terminates under fair scheduling.

    \begin{mydefinition}[Proof Relation]\label{def:ProofRelation}
        We define a proof relation \htProvesSymb for Hoare triples \hoareTriple{\htPreConVar}{\cmdVar}{\htPostConVar}
        according to the rules presented in Figure~\ref{fig:proofRules}.
    \end{mydefinition}

    \begin{figure}
        \begin{mathpar}
            \PRFrame
            \and
            \PRExit
            \and
            \PRLoop
            \and
            \PRFork
            \and
            \PRSeq
            \and
            \PRViewShift
        \end{mathpar}
        \vspace{-0.3cm}
        \caption{Proof rules.}
        \label{fig:proofRules}
        \vspace{-0.4cm}
    \end{figure}

    Obligation-credit pairs can be generated and removed via a ghost step by applying \PRViewShiftName plus \VSObCredIntroName.
    The only way to discharge an obligation, i.e., removing it without simultaneously removing a credit, is via rule \PRExitName.
    That is, a discharging program \htProves{\obs{1}}{\cmdVar}{\noObs}, must involve an abrupt \cmdExit at some point.
    
    We can pass obligations and credits to newly forked threads by applying \PRForkName.
    However, note that in order to prove anything about a command \cmdFork[\cmdVar], we need to prove that the forked thread discharges or cancels all of its obligations.

    The only way to justify busy-waiting is via \PRLoopName, which requires the possession of a credit.
    Note that the rule forbids the looping thread to hold any obligations.
    This ensures that threads do not busy-wait for themselves to \cmdExit.
    In other words, we can only prove a triple of the form
    \hoareTriple{\htPreConVar}{\cmdLoop[\cmdSkip]}{\htPostConVar}
    if precondition \htPreConVar expresses the knowledge that another thread is obliged to \cmdExit.

    \paragraph{Example}
    Consider the program
    $\cmdExVar = 
        \cmdFork[\cmdExit];
        \cmdLoop[\cmdSkip]
    $
    presented in Figure~\ref{fig:ExampleProgramCode}.
    It forks a new thread instructed to \cmdExit and busy-waits for it to do so.
    We can verify its termination under fair scheduling by proving 
    \htProves{\noObs}{\cmdExVar}{\noObs}.
    Figure~\ref{fig:VerificationExampleCode} sketches this proof.
    Note that the assumption of fair scheduling is essential, since otherwise we would have no guarantees that the exiting thread is ever executed.

    \begin{figure}
            $$
            \begin{array}{l l}
                    \progProof{\obs{0}}\\
                    \progProof{\obs{1} \slStar \credit}
                            &\proofRuleHint{\PRViewShiftName + \VSObCredIntroName}\\
                    \cmdFork
                            &\proofRuleHint{\PRForkName}\\
                            
                            \quad\progProof{\obs{1}}\\
                            \quad\cmdExit;
                                    &\proofRuleHint{\PRExitName}\\
                            \quad\progProof{\slFalse}\\
                            \quad\progProof{\obs{0}}
                                    &\proofRuleHint{\PRViewShiftName + \VSSemImpName}\\
                    
                    \progProof{\obs{0} \slStar \credit}\\
                    \cmdLoop[\cmdSkip]
                            &\proofRuleHint{\PRLoopName}\\
                            
                            
                    \progProof{\slFalse}\\
                    \progProof{\obs{0}}
                            &\proofRuleHint{\PRViewShiftName + \VSSemImpName}
            \end{array}
            $$
            \vspace{-0.2cm}
            \caption
            {
                    Verification sketch for a program with two threads: 
                    An exiting thread and one busy-waiting for abrupt termination.
                    Applied proof rules are highlighted in violet.
            }
            
            \label{fig:VerificationExampleCode}
            \vspace{-0.3cm}
    \end{figure}

    The following  \hspace{0.02cm}soundness\hspace{0.02cm} theorem\hspace{0.02cm} states that we can prove termination of a program \cmdVar under fair scheduling by proving that it discharging all its \cmdExit obligations,
    i.e.,
    \htProves{\obs{n}}{\cmdVar}{\noObs}.
    By such a proof we verify that no fair infinite reduction sequence of \cmdVar exists.
    That is, the reduction eventually terminates, either abruptly via \cmdExit or normally.
     

    \begin{restatable}[Soundness]{mytheorem}{SoundnessTheorem}
    \label{theo:Soundness}
        Let
        $\htProves{\obs{n}}{\cmdVar}{\noObs}$.
        There exists no fair reduction sequence $(\tpVar_i)_{i\in\N}$ starting with $\tpVar_0 = \setOf{(\tidVar_0, \cmdVar;\contDone)}$ for any $\tidVar_0 \in \N$.
    \end{restatable}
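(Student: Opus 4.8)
The plan is to argue by contradiction. I would assume a fair reduction sequence $(\tpVar_i)_{i\in\N}$ with $\tpVar_0 = \setOf{(\tidVar_0, \cmdVar;\contDone)}$ exists and derive a contradiction from the hypothesis \htProves{\obs{n}}{\cmdVar}{\noObs}. First I would lift the proof relation from commands to continuations (proving \htProves{\obs{n}}{\cmdVar;\contDone}{\noObs} by composing the given derivation with the trivial derivation for $\contDone$) and then to thread pools. The central device is an \emph{annotated configuration}: a thread pool together with an assignment of a complete resource bundle to each live thread and a per-thread derivation witnessing that the thread's remaining continuation is provable from an assertion modelled by its bundle with postcondition $\noObs$. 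I would then fix the invariant that (i)~every live thread carries such a bundle and derivation and (ii)~the global balance $O_i - C_i = n$ holds, where $O_i$ and $C_i$ are the total numbers of obligations and credits across all threads. The initial configuration satisfies this, since its single thread carries $\obs{n}$, no credits, and the lifted derivation.

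Next I would prove that the invariant is preserved along every step $\tpVar_i$ to $\tpVar_{i+1}$: reading off from the scheduled thread's derivation which ghost steps to interleave (pair introduction or cancellation via \VSObCredIntroName, and the conclusions licensed by \VSSemImpName), the annotation can be transported across the real step. Balance is preserved because a real step only redistributes resources between a thread and its child (\PRForkName conserves them) while leaving $O_i - C_i$ fixed, \VSObCredIntroName changes $O_i$ and $C_i$ in lockstep, and \VSSemImpName changes neither, since it replaces an assertion by a semantic consequence satisfied by the same bundle. Two consequences will drive the rest of the argument: \cmdExit cannot occur, for the \cmdExit step leaves no running threads and hence no successor, contradicting infiniteness; and any thread currently executing \cmdLoop[\cmdSkip] holds, by \PRLoopName, a credit but no obligation, so by the balance condition an obligation resides on some other thread.

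I would then record two finiteness facts. Only finitely many threads are ever created, because every \cmdFork consumes a fork node, no construct produces fresh ones, and forked code is a subterm of existing code; and since each derivation is finite and a busy-waiting thread performs no ghost steps while spinning, only finitely many obligation-credit pairs are ever introduced. Hence there is a stage $N$ after which no new obligations are created. To each obligated thread I would attach a well-founded rank read off its continuation (its syntactic size up to the first \cmdExit) and take the finite multiset of these ranks as the global measure. Because an obligated thread can never sit at a loop (\PRLoopName forbids it), each real step of such a thread either strictly shrinks its rank, or transfers its obligation through \cmdFork to the strictly smaller forked subterm, or is the forbidden \cmdExit; a ghost cancellation merely deletes a rank. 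Thus after $N$ the measure never increases and strictly decreases whenever an obligated thread steps.

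Finally I would invoke fairness: every live thread, in particular every obligated one, is scheduled again at some later index. So if any obligated thread survives past $N$, the measure strictly decreases infinitely often in the well-founded multiset order, which is impossible; therefore the obligations must deplete to zero. For $n \geq 1$ this already contradicts the balance $O_i = C_i + n \geq 1$, and for $n = 0$ depletion forces $C_i = 0$, so no thread can busy-wait and the remaining loop-free computation terminates, contradicting infiniteness. Either way we reach a contradiction. I expect the main obstacle to be exactly the measure: defining the rank so that it behaves monotonically under obligation transfer through \cmdFork and under the ghost cancellations permitted by view shifts, and proving that the annotated invariant --- in particular the matching of the finitely many ghost steps to each finite derivation --- is genuinely preserved by every reduction step.
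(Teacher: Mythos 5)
Your proposal is sound in outline, but it takes a genuinely different route from the paper. The paper factors the proof into a semantic part and a combinatorial part: Hoare triples are interpreted via a coinductively defined safety predicate (Definitions~\ref{def:Safety} and~\ref{def:HoareTripleModelRelation}); the proof rules are shown sound for that model by induction on derivations (Lemma~\ref{lem:SoundnessHoareTriples}); the model is then used to turn a fair plain reduction sequence into a fair \emph{annotated} one (Lemma~\ref{lem:ModelRelationAllowsAnnotation}); and the contradiction is obtained graph-theoretically: in the maximal loop-edge-free sibling-closed prefix of the program order graph --- which is finite --- the sum of obligations held at the leaves equals the sum of credits held there (Lemma~\ref{lem:LeafObsEqLeafCreds}), yet every leaf is either the final step of a normally terminating thread or the predecessor of a loop edge, and in either case may hold no obligations (Lemma~\ref{lem:NoFairAnnotatedReductionSequence}). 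You instead keep the \emph{syntactic} derivations attached to the threads and prove a subject-reduction-style invariant (a complete bundle plus a derivation per thread, with global balance $O-C=n$), and you replace the leaf-counting argument by a well-founded multiset measure that fairness forces to decrease infinitely often. The shared skeleton is the same --- annotated executions, \cmdExit being impossible in an infinite sequence, loops requiring a credit and excluding obligations, normal thread termination excluding obligations --- but your argument buys elementarity (no coinduction, no program order graph) at the price of leaning on language-specific finiteness facts: finitely many threads are ever created, finitely many ghost steps occur, and a syntactic rank decreases at every non-loop step (note that if the sequencing reduction is read as re-association, plain syntactic size does \emph{not} strictly decrease, so your rank needs a minor adjustment there). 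Those facts hold here only because loop bodies are trivial, so forks cannot be iterated; they would break under the extensions the paper targets (ghost signals, loops with bodies, realistic languages), whereas the paper's counting argument over a finite loop-edge-free prefix and its semantic model are built to survive such extensions. Finally, be aware that the ``transport'' of derivations across reduction steps, through frame and view-shift wrappers, which you correctly flag as the main obstacle, is essentially the same work the paper discharges once and for all in Lemma~\ref{lem:SoundnessHoareTriples}, so the overall proof effort of the two routes is comparable.
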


\section{Soundness}\label{sec:Soundness}

    In this section, we sketch the proof of Soundness Theorem~\ref{theo:Soundness} as well as all necessary concepts and lemmas.
    Formal definitions and proofs can be found in the technical report~\cite{Reinhard2020AbruptExitTR}.

    \paragraph{Bridging the Gap}
    According to our proof rules, ghost resources are not static but affected by the commands occurring in a program.
    For instance, forking allows us to pass resources to the newly forked thread and exiting discharges obligations.
    Hence, we annotate threads by resource bundles to obtain annotated thread pools and define annotated versions \rstRedStepSymb and \rtpRedStepSymb of the relations \stRedStepSymb and \tpRedStepSymb.
    We define \rtpRedStepSymb as the union of two reduction relations:
    (i)~\ngrtpRedStepSymb for actual program execution steps and 
    (ii)~\grtpRedStepSymb for ghost steps.
    These relations reflect the interaction between commands and ghost resources as reflected in the proof rules as well as ghost steps realized by view shifts.
    
    Figure~\ref{fig:ExampleStepRules} shows three of the rules we use to define \rstRedStepSymb, \grtpRedStepSymb, \ngrtpRedStepSymb.
    A ghost step performed by \GSObCredIntroName spawns an obligation-credit pair, comparable to a verification ghost step realized by a view shift.
    Reduction rule \RARedSTLoopName requires the looping thread to hold a credit but forbids it to hold any obligation.
    This ensures that busy-waiting threads do not wait for themselves and corresponds to the restriction that proof rule \PRLoopName imposes on looping threads.
    Furthermore, we only allow annotated threads to terminate, 
    i.e., be removed from the thread pool,
    if they do not hold any obligations, as shown by rule \RARedTPThreadTermName.
    Note that in contrast to the plain semantics, reductions in the annotated semantics can get stuck.
    
    The annotated semantics act as an intermediary between high-level reasoning steps, e.g., using obligations to track which thread is going to \cmdExit, and the actual program executions.

    \begin{figure}
            \begin{mathpar}
                    \RARedSTLoop
                    \and
                    \GSObCredIntro
                    \and
                    \RARedTPThreadTerm
            \end{mathpar}
            \caption{Example reduction rules for \ngrtpRedStepSymb and \grtpRedStepSymb.}
            \label{fig:ExampleStepRules}
    \end{figure}

    \paragraph{Interpreting Hoare Triples}
    \, We interpret \phantom{} Hoare triples in terms of a model relation \htModelsSymb and an auxiliary safety relation \isSafe{\rbVar}{\cmdVar}.
    Intuitively, a continuation \contVar is safe under a complete resource bundle \rbVar if \rbVar provides all necessary ghost resources such that the reduction of  $(\rbVar, \contVar)$ does not get stuck.
    We write \isResAnnot{\rtpVar}{\tpVar} to express that \rtpVar is an annotated version of \tpVar, containing the same threads but each equipped with a resource bundle.
    
    \begin{restatable}[Safety]{mydefinition}{defSafety}\label{def:Safety}
        We define the safety predicate $\isSafeName \subseteq \ResourceBundleSet \times \ContSet$ coinductively as the greatest solution (with respect to $\subseteq$) of the following equation:
        $$
        \begin{array}{l}
                \isSafe{\rbVar}{\contVar}\ =\ \complete{\rbVar} \rightarrow\\
                \forall \tpVar, \nextTpVar.\
                \forall \tidVar \in \domOf{\tpVar}.\
                \forall \rtpVar.\\
                \ \
                        \tpVar(\tidVar) = \contVar \wedge
                        \tpRedStep{\tpVar}{\tidVar}{\nextTpVar} \wedge
                        \isResAnnot{\rtpVar}{\tpVar} \wedge
                        \rtpVar(\tidVar) = (\rbVar, \contVar)
                        \rightarrow\\
                \ \
                                \exists \grtpVar.\
                                \exists \nextRtpVar.\
                                        \grtpMultRedStep{\rtpVar}{\tidVar}{\grtpVar} \wedge
                                        \ngrtpRedStep{\grtpVar}{\tidVar}{\nextRtpVar} \wedge
                                        \isResAnnot{\nextRtpVar}{\nextTpVar}\\
                \ \ \phantom{\exists \grtpVar.\ }
                                        \wedge 
                                        \forall (\rbVar^*, \contVar^*) \in 
                                            \rangeOf{\nextRtpVar} \setminus \rangeOf{\rtpVar}.\
                                                    \isSafe{\rbVar^*}{\contVar^*}
        \end{array}
        $$
    \end{restatable}

    \begin{restatable}[Hoare Triple Model Relation]{mydefinition}{defHoareTripleModelRelation}\label{def:HoareTripleModelRelation}
        We define the Hoare triple model relation \htModelsSymb such that
        $$
        \begin{array}{c}
                \htModels{\htPreConVar}{\cmdVar}{\htPostConVar}\\
                \Longleftrightarrow\\
                \begin{array}{l}
                        \forall \frbVar.\
                        \forall \contVar.\
                                (\forall \rbPostVar.\
                                        \assModels{\rbPostVar}{\htPostConVar}
                                        \ \rightarrow\ 
                                        \isSafe{\rbPostVar \tupleCup \frbVar}{\contVar})
                        \\
                        \phantom{\forall \frbVar.\
                                                \forall \contVar.\ }
                                \rightarrow\ 
                                (\forall \rbPreVar.\
                                        \assModels{\rbPreVar}{\htPreConVar}
                                        \ \rightarrow \
                                        \isSafe{\rbPreVar \tupleCup \frbVar}{\,\cmdVar\, ; \contVar})
                \end{array}
        \end{array}
        $$
    \end{restatable}
    
    Every specification \hoareTriple{\htPreConVar}{\cmdVar}{\htPostConVar} we can derive with our proof rules also holds in our model.
    
    \begin{restatable}[Soundness of Hoare Triples]{mylemma}{lemSoundnessHoareTriples}\label{lem:SoundnessHoareTriples}
        Let \htProves{\htPreConVar}{\cmdVar}{\htPostConVar}.
        Then \htModels{\htPreConVar}{\cmdVar}{\htPostConVar} holds.
    \end{restatable}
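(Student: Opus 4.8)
The plan is to proceed by \emph{rule induction} on the derivation of $\htProves{\htPreConVar}{\cmdVar}{\htPostConVar}$, establishing for each proof rule that its conclusion satisfies the model relation $\htModelsSymb$ whenever its premises do. Unfolding Definition~\ref{def:HoareTripleModelRelation}, in every case I fix an arbitrary frame bundle $\frbVar$ and continuation $\contVar$, assume the \emph{postcondition-safety} hypothesis (that $\isSafe{\rbPostVar \tupleCup \frbVar}{\contVar}$ holds for every $\rbPostVar$ with $\assModels{\rbPostVar}{\htPostConVar}$), and must discharge the \emph{precondition-safety} goal (that $\isSafe{\rbPreVar \tupleCup \frbVar}{\cmdVar; \contVar}$ holds for every $\rbPreVar$ with $\assModels{\rbPreVar}{\htPreConVar}$). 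Since $\isSafeName$ is defined coinductively, each safety goal is ultimately proved by exhibiting a suitable post-fixpoint and appealing to the coinduction principle; the backward, continuation-passing shape of $\htModelsSymb$ is what lets the induction hypotheses feed the continuation $\contVar$.

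The structural rules are comparatively routine. For the sequencing rule I use right-associativity of $\cdot;\cdot$ to rewrite $(\cmdVar_1; \cmdVar_2); \contVar$ as $\cmdVar_1; (\cmdVar_2; \contVar)$, apply the induction hypothesis for $\cmdVar_2$ to turn postcondition-safety for $\htPostConVar$ into safety of $\cmdVar_2; \contVar$ under the intermediate assertion, and then feed this to the induction hypothesis for $\cmdVar_1$. For the frame rule I absorb the framed assertion's resources into $\frbVar$, using associativity and commutativity of $\tupleCup$ together with the definition of $\slStar$ in the model relation. The $\PRViewShiftName$ rule requires an auxiliary lemma stating that a view shift $\viewShift{\htPreConVar}{\htPostConVar}$ is realized by a sequence of ghost steps $\grtpRedStepSymb$; combined with the fact that safety is closed under prefixing such steps (the $\grtpMultRedStep{\cdot}{\cdot}{\cdot}$ component of Definition~\ref{def:Safety}), this lets me transport safety across the view-shifted pre- and postconditions.

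The command rules carry the real content, and each must reconcile the resources guaranteed by the precondition with the stuckness side-conditions of the annotated semantics. For $\PRExitName$ the precondition supplies a full obligations chunk (here completeness of the bundle is essential), so the annotated exit step can fire; exit spawns no threads and empties the pool, so the last conjunct of Definition~\ref{def:Safety} is vacuous and safety follows without further coinduction. For $\PRForkName$ I split $\rbPreVar \tupleCup \frbVar$ into the part passed to the forked thread and the part retained, obtain safety of the forked continuation from the premise about the forked command discharging its obligations, and verify that the newly added thread in the range of the annotated pool is again safe. For $\PRLoopName$ the precondition provides a credit and, crucially, \emph{no} obligation, which is exactly what $\RARedSTLoopName$ demands; since $\cmdLoop[\cmdSkip]$ reduces to itself while preserving the resource bundle, safety is established by coinduction with the invariant ``the bundle is complete, holds a credit, and contains no obligation''.

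I expect the main obstacle to be the coinductive bookkeeping around the safety predicate, especially in the $\PRForkName$ and $\PRLoopName$ cases: one must choose the coinductive candidate so that, after each annotated reduction step, every bundle appearing in the new annotated thread pool (the retained thread, the freshly forked thread, and in the loop case the self-reproduced thread) again lies in the candidate, and that completeness $\complete{\cdot}$ is preserved under the splitting induced by $\tupleCup$. Aligning the ghost-step prefix $\grtpMultRedStep{\cdot}{\cdot}{\cdot}$ of Definition~\ref{def:Safety} with the ghost steps produced by view shifts in the $\PRViewShiftName$ case is the other delicate point, since it is the only place where the view-shift soundness lemma and the annotated ghost semantics must be made to coincide exactly.
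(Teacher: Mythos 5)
Your proposal follows exactly the paper's approach: induction on the derivation of \htProves{\htPreConVar}{\cmdVar}{\htPostConVar}, with a per-rule case analysis that discharges each safety goal coinductively via the continuation-passing structure of the model relation. The paper states only the induction itself (deferring details to the technical report), and your case sketches — sequencing via re-association, framing into \frbVar, view shifts realized as ghost-step prefixes, and the exit/fork/loop cases reconciling bundle resources with the annotated semantics' side conditions — are consistent with its definitions.
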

    \begin{restatable}{proof}{proofLemSoundnessHoareTriples}
        By induction on the derivation of \htProves{\htPreConVar}{\cmdVar}{\htPostConVar}.
    \end{restatable}

    \paragraph{Constructing Annotated Executions}
    Given a fair reduction sequence \tpRSeq{\N} starting with 
    $\tpVar_0 = \setOf{(\tidVar, \cmdVar;\contDone)}$,
    we can construct a fair annotated reduction sequence \rtpRSeq{\N}.
    Our definition of fairness for annotated reduction sequences forbids \rtpRSeq{\N} to perform ghost steps forever.
    
    \begin{restatable}[Fair Annotated Reduction Sequences]{mydefinition}{defFairAnnotatedRedSeq}
            We call an annotated reduction sequence \rtpRSeq{\N} fair iff
            for all $k\in \N$ and $\tidVar \in \domOf{\rtpVar_k}$
            there exists $j \geq k$ such that
            $$\ngrtpRedStep{\rtpVar_j}{\tidVar}{\rtpVar_{j+1}}.$$
    \end{restatable}

    \begin{restatable}{mylemma}{lemModelAllowsAnnotation}\label{lem:ModelRelationAllowsAnnotation}
        Let \htModels{\htPreConVar}{\cmdVar}{\htPostConVar}, \assModels{\rbPreVar}{\htPreConVar} and \complete{\rbPreVar}.
        Furthermore, let \tpRSeq{\N} be fair with
        $\tpVar_0 = \setOf{(\tidVar, \cmdVar;\contDone)}$.
        There exists a fair annotated reduction sequence \rtpRSeq{\N} with
        $\rtpVar_0 = \setOf{(\tidVar, (\rbPreVar, \cmdVar;\contDone))}$.
    \end{restatable}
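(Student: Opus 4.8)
The plan is to construct the annotated sequence $\rtpRSeq{\N}$ incrementally, replaying the plain sequence one step at a time and using the safety predicate of Definition~\ref{def:Safety} as the engine that turns each plain thread-pool step into a matching finite block of annotated steps. Throughout the construction I maintain the invariant that the annotated pool $\rtpVar$ reached after replaying $\tpVar_i$ (i)~is a resource annotation of the plain pool, $\isResAnnot{\rtpVar}{\tpVar_i}$, and (ii)~annotates every thread with a \emph{complete}, \emph{safe} bundle, i.e. $\complete{\rbVar}$ and $\isSafe{\rbVar}{\contVar}$ for each $(\rbVar,\contVar) \in \rangeOf{\rtpVar}$. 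The base point is $\rtpVar_0 = \setOf{(\tidVar,(\rbPreVar,\cmdVar;\contDone))}$, which annotates $\tpVar_0$ and whose single bundle $\rbPreVar$ is complete by hypothesis; it remains to see that it is safe.

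First I turn the model-level hypothesis $\htModels{\htPreConVar}{\cmdVar}{\htPostConVar}$ into that safety statement. Instantiating Definition~\ref{def:HoareTripleModelRelation} with the empty frame $(\emptyset,0)$ (a unit for $\tupleCup$) and continuation $\contDone$, it suffices to discharge the antecedent $\isSafe{\rbPostVar}{\contDone}$ for every $\rbPostVar$ modelling $\htPostConVar$. This holds because the sole reduction available to $\contDone$ is thread termination, which $\RARedTPThreadTermName$ permits precisely when the thread carries no obligations; in the application relevant to Soundness, where $\htPostConVar = \noObs$, the single obligations chunk of any complete $\rbPostVar$ is empty, so the step is always available. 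The model relation then delivers $\isSafe{\rbPreVar}{\cmdVar;\contDone}$, completing clause~(ii) at stage~$0$.

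For the inductive step, assume the invariant for $\rtpVar$ over $\tpVar_i$ and consider the plain step $\tpRedStep{\tpVar_i}{\tidVar_i}{\tpVar_{i+1}}$. The scheduled thread carries a complete, safe bundle, so I may apply the defining clause of Definition~\ref{def:Safety} verbatim: it yields a finite ghost run followed by one genuine step ending in some $\nextRtpVar$ with $\isResAnnot{\nextRtpVar}{\tpVar_{i+1}}$, together with $\isSafe{\rbVar^*}{\contVar^*}$ for every annotation $(\rbVar^*,\contVar^*) \in \rangeOf{\nextRtpVar} \setminus \rangeOf{\rtpVar}$. I append this block to the sequence. Clause~(i) holds by construction; for clause~(ii), completeness is preserved because the annotated rules keep each thread holding exactly one obligations chunk, while safety of the genuinely new annotations --- the rewritten scheduled thread and any forked child --- is exactly what the clause just supplied, and the remaining annotations already lie in $\rangeOf{\rtpVar}$ and are safe by the induction hypothesis. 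Since every block ends with at least its non-ghost step, concatenating the blocks over all $i$ produces an infinite annotated sequence that replays $\tpRSeq{\N}$.

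The delicate part, and the main obstacle, is fairness. The decisive observation is that each plain step expands into a \emph{finite} block: the ghost run handed to us by Definition~\ref{def:Safety} is a reflexive--transitive closure and so contributes only finitely many ghost steps, which is exactly what prevents a thread from being starved by perpetual ghost activity --- the phenomenon forbidden by the fairness notion for annotated sequences. To make this precise I record the block boundaries explicitly, letting the block replaying plain step $i$ occupy annotated indices $[m_i, m_{i+1})$ with its genuine step last; given an annotated index and a live thread $\tidVar$, I translate back to the corresponding plain index, invoke fairness of $\tpRSeq{\N}$ (Definition~\ref{def:FairInfiniteReduction}) to find a later plain step scheduling $\tidVar$, and read off its non-ghost step in the matching block. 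Checking that threads are tracked correctly across ghost steps --- which touch only ghost resources and never $\domOf{\rtpVar}$ --- and that the index translation is monotone is the routine but fiddly heart of the argument.
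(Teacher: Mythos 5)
Your construction matches the proof the paper intends (the paper defers details to its technical report, but its prose sketch is precisely your argument): obtain safety of the initial bundle from the Hoare-triple model relation, replay the plain sequence block by block through the coinductive clause of Definition~\ref{def:Safety}, maintain the invariant that all annotations are complete and safe, and transfer fairness by noting that each block is finite and contains exactly one non-ghost step of the scheduled thread. Your inductive step and your index-translation argument for fairness are sound.

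The genuine issue sits exactly where you placed your parenthetical: the base case. Discharging the antecedent of Definition~\ref{def:HoareTripleModelRelation} with the empty frame and continuation \contDone requires $\isSafe{\rbPostVar}{\contDone}$ for \emph{every} $\rbPostVar$ with $\assModels{\rbPostVar}{\htPostConVar}$, and your justification of this works only when \htPostConVar forces complete bundles to hold no obligation surplus, as $\noObs$ does. This is not a cosmetic restriction that a cleverer argument could lift: for arbitrary \htPostConVar, the lemma as stated is false, so the restriction must be promoted from an aside to a hypothesis. Concretely, take $\htPreConVar = \noObs$, $\cmdVar = \cmdFork[{\cmdLoop[\cmdSkip]}]$, $\htPostConVar = \obs{1}$ and $\rbPreVar = (\multiset{0},0)$. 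The triple is even derivable --- view shift \noObs to $\obs{1} \slStar \credit$ via \PRViewShiftName{} and \VSObCredIntroName, apply \PRForkName{} passing $\obs{0} \slStar \credit$ to the child (whose loop is justified by \PRLoopName), and keep $\obs{1}$ in the parent --- hence modelled by Lemma~\ref{lem:SoundnessHoareTriples}. Fair plain reduction sequences exist: the parent forks and later terminates, the child loops forever. But no fair annotated sequence starting from $\rtpVar_0 = \setOf{(\tidVar, ((\multiset{0},0), \cmdVar;\contDone))}$ can exist: ghost steps and the fork rule preserve the total of credits minus obligations, which is $0$ initially; yet fairness forces the child to take infinitely many loop steps, which by \RARedSTLoopName{} requires its credits to exceed its obligations by at least one, and forces the parent, whose continuation after the fork is \contDone, to eventually terminate, which by \RARedTPThreadTermName{} requires its credits to be at least its obligations; summing the two per-thread surpluses gives at least $1$, contradicting conservation (this is the same accounting that Lemma~\ref{lem:LeafObsEqLeafCreds} formalizes). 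So you should state and prove the lemma under the assumption that \htPostConVar entails the absence of obligations (e.g.\ specialize to $\htPostConVar = \noObs$); that version is exactly what your proof establishes, and it is the only instance invoked in the proof of Theorem~\ref{theo:Soundness}.
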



    As next step, we show that a fair and infinite annotated reduction sequence such as \rtpRSeq{\N} constructed above, must start with an initial credit.
    We do this by analysing the sequence's program order graph 
    $\progOrdGraph(\rtpRSeq{\N})$
    sketched below.
    
    \paragraph{Program Order Graph}
    In the program order graph \phantom{\,} $\progOrdGraph(\rtpRSeq{\N})$
    of
    \rtpRSeq{\N}, every node $i$ represents the $i^\text{th}$ reduction step of the sequence, i.e.,
    \rtpRedStep{\rtpVar_i}{\tidVar_i}{\rtpVar_{i+1}}.
    An edge has the form $(i, \tidVar, n, j)$ and expresses that one of the following hold:
    \begin{itemize}
            \item \rtpRedStep{\rtpVar_j}{\tidVar_j}{\rtpVar_{j+1}} is the first step of a thread forked in step~$i$ or
            \item $j$ is the next index representing a reduction of thread $\tidVar_i$ (in which case $\tidVar_i = \tidVar_j$ holds).
    \end{itemize}
    In both cases, $n$ represents the name of the reduction rule applied in step
    \rtpRedStep{\rtpVar_i}{\tidVar_i}{\rtpVar_{i+1}}.
    We set 0 as the graph's root.

        

    \begin{mylemma}\label{lem:LeafObsEqLeafCreds}
            Let $\progOrdGraph_p$ be a finite 
            sibling-closed
            \ifthenelse
                    {\equal{\thesection}{\ref{sec:Soundness}}}
                    {
                            \!~\footnotemark
                            \footnotetext
                            {
                                    A subgraph is sibling-closed if, for each node $n$ of the subgraph, $n$'s predecessors' successors are also in the subgraph. 
                                    In other words, for each fork step node, either both the next step of the forking thread and the first step of the forked thread are in the subgraph, or neither are.
                            }
                    }
                    {}
            \!\!
            prefix of some program order graph $\progOrdGraph(\rtpRSeq{\N})$
            with $\rtpVar_0 = \setOf{(\tidVar_0, (\rbVar_0, \contVar_0))}$
            for some \completeTerm $\rbVar_0$.
            For all $l \in \leavesOf{\progOrdGraph_p}$ choose $\tidVar_l, \obVar_l, \credVar_l, \contVar_l$ such that \rtpRedStep{\rtpVar_l}{\tidVar_l}{\rtpVar_{l+1}} and 
            $\rtpVar_l(\tidVar_l) = ((\multiset{\obVar_l}, \credVar_l), \contVar_l)$.



            The sum of the obligations held by the threads reduced in the leaves of $\progOrdGraph_p$ equal the sum of the credits held by these threads:
            $$
                    \Sigma_{l \in \leavesOf{\progOrdGraph_p}}\, \obVar_l
                    \ \ = \ \
                    \Sigma_{l \in  \leavesOf{\progOrdGraph_p}}\, \credVar_l
            $$
    \end{mylemma}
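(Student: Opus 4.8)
The plan is to prove the stronger conservation invariant that for \emph{every} finite sibling-closed prefix the quantity $\sumOb{\progOrdGraph_p} - \sumCred{\progOrdGraph_p}$ is the same, where $\sumOb{\progOrdGraph_p}$ and $\sumCred{\progOrdGraph_p}$ denote the two sums appearing in the statement. I would read the leaves of $\progOrdGraph_p$ as a moving frontier of the annotated execution, and argue by induction on the number of nodes of $\progOrdGraph_p$ that enlarging a sibling-closed prefix never changes this difference, so that its value everywhere equals its value at the root. The equality claimed in the lemma then follows once I pin that value down to $0$.

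For the base case the only one-node sibling-closed prefix is the root $0$, whose sole leaf reduces the initial thread $\tidVar_0$ with bundle $\rbVar_0 = (\multiset{\obVar_0}, \credVar_0)$; here the difference is $\obVar_0 - \credVar_0$, which is $0$ because in the construction of $\rtpRSeq{\N}$ the initial bundle is balanced (the top-level program is annotated from $\obs{0}$, i.e.\ $(\multiset{0},0)$). Any larger finite sibling-closed prefix is reached from a smaller one by selecting a leaf $l$ that still has children in $\progOrdGraph(\rtpRSeq{\N})$ and adjoining \emph{all} of them at once; sibling-closedness is exactly what guarantees that for a fork step both the continuation of the forking thread and the first step of the forked thread enter the prefix together, which is the decisive case below. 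I would first check that every finite sibling-closed prefix is indeed obtained from the root through a chain of such expansions, so that the induction covers all prefixes.

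The inductive step is a case analysis on the reduction rule applied at the expanded leaf $l$, recording the net effect on $\sumOb{\cdot}$ and $\sumCred{\cdot}$. A ghost step by \GSObCredIntroName, and dually a cancellation, turns $(\multiset{\obVar_l}, \credVar_l)$ into $(\multiset{\obVar_l \pm 1}, \credVar_l \pm 1)$, changing both sums by the same amount. A fork redistributes the single obligations chunk and the credits additively between the two children, so the obligation counts and credit counts of the two new leaves sum back to $\obVar_l$ and $\credVar_l$; both sums are preserved. A busy-wait step by \RARedSTLoopName and the remaining real steps (sequencing and the like) leave the bundle unchanged and have a single child, so they preserve both sums trivially. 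Leaves whose step removes a thread from the pool, namely \RARedTPThreadTermName and \cmdExit, have no children and are therefore never expanded; such a leaf is accounted for at the moment its predecessor is expanded, and \RARedTPThreadTermName forces its obligation count to $0$. Moreover, since a fair \emph{infinite} sequence never performs \cmdExit (it would empty the pool), the exit case does not arise for the sequences of interest.

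The main obstacle I anticipate is not any individual rule but the frontier bookkeeping itself: I must make the expansion order rigorous so the induction genuinely ranges over all finite sibling-closed prefixes, and I must ensure that a thread leaving the pool is counted precisely when it is first introduced as a leaf, rather than being silently dropped. The fork case is the one that genuinely uses sibling-closedness together with the additive split of a chunk's obligation count; once it is in place, the difference $\sumOb{\progOrdGraph_p} - \sumCred{\progOrdGraph_p}$ is constantly $0$, which is exactly the asserted equality $\sumOb{\progOrdGraph_p} = \sumCred{\progOrdGraph_p}$.
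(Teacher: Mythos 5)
Your overall strategy is the same as the paper's: the paper's entire proof is ``induction on the size of $\progOrdGraph_p$'', and your frontier induction with a case analysis on the rule applied at the expanded leaf is exactly how that induction must go. Your inductive step is sound: ghost steps change the reduced thread's obligation and credit counts in lockstep; a fork splits the single obligations chunk and the credits additively between the two sibling nodes, which is precisely where sibling-closure is needed; loop and sequencing steps leave the bundle unchanged; and termination and \cmdExit nodes have no successors, so they are never expanded. Your preliminary worry about reaching every finite sibling-closed prefix by a chain of full-leaf expansions is also resolvable, e.g.\ by adjoining nodes in increasing step order, siblings together.

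The genuine gap is in your base case. You pin the conserved difference to $0$ by asserting that the initial bundle is $(\multiset{0},0)$ ``by construction of $\rtpRSeq{\N}$''. That is not among the lemma's hypotheses --- it assumes only that $\rbVar_0$ is \completeTerm, i.e.\ that $\cardinalityOf{\obBagVar_0}=1$ --- and it is inconsistent with the paper's own use of the lemma: Lemma~\ref{lem:NoFairAnnotatedReductionSequence} invokes it for sequences starting from $((\multiset{\obVar_0},0),\contVar)$ with \emph{arbitrary} $\obVar_0$, i.e.\ bundles holding obligations but no credits. So your proof covers only a special case that misses the intended application. Moreover, some restriction of this kind is unavoidable for the equality as literally stated: starting from the \completeTerm bundle $(\multiset{0},3)$, a thread may loop forever, and for the root-only prefix of that sequence's graph one gets $\Sigma_l\,\obVar_l = 0 \neq 3 = \Sigma_l\,\credVar_l$. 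The repair is to prove exactly what your induction actually propagates, in difference form: writing $\rbVar_0 = (\multiset{\obVar_0},\credVar_0)$, every finite sibling-closed prefix satisfies $\Sigma_{l \in \leavesOf{\progOrdGraph_p}}\,\obVar_l - \Sigma_{l \in \leavesOf{\progOrdGraph_p}}\,\credVar_l = \obVar_0 - \credVar_0$, with no balancedness assumption. This specializes to the stated equality when $\obVar_0 = \credVar_0$, and for the bundles of Lemma~\ref{lem:NoFairAnnotatedReductionSequence} it yields $\Sigma_l\,\obVar_l \geq \Sigma_l\,\credVar_l$, which is all the contradiction there needs: the loop-edge leaf holds a credit, so some leaf must hold an obligation. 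A final minor point: you appeal to fairness to rule out \cmdExit steps, but mere infinitude of the reduction sequence already does so, since after \cmdExit the thread pool admits no further step.
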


    \begin{proof}
            Proof by induction on the size of $\progOrdGraph_p$.
    \end{proof}

    \begin{mylemma}\label{lem:NoFairAnnotatedReductionSequence}
            There are no fair annotated reduction sequences \rtpRSeq{\N} with
            $\rtpVar_0 = \setOf{(\tidVar_0, ((\multiset{\obVar_0}, 0), \contVar))}$
            for any $\tidVar_0, \obVar_0, \contVar$.
    \end{mylemma}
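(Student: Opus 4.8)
The plan is to assume, for contradiction, that a fair annotated reduction sequence $\rtpRSeq{\N}$ with $\rtpVar_0 = \setOf{(\tidVar_0, ((\multiset{\obVar_0}, 0), \contVar))}$ exists, and to derive a contradiction with the obligation/credit balance of Lemma~\ref{lem:LeafObsEqLeafCreds}. Since reduction sequences are infinite, the sequence performs infinitely many steps, and by fairness every thread present at any point eventually takes a genuine program (non-ghost) step; in particular the pool is never empty, since an empty pool admits no further reduction. I would first record that no \cmdExit step can ever occur: an exit step terminates the whole pool and leaves no successor configuration, which is incompatible with an infinite sequence. Hence every thread eventually either runs forever in a $\cmdLoop[\cmdSkip]$ or reaches \contDone and is removed.

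Next I would establish that, beyond some finite index, the only program steps are loop steps. First, only finitely many threads are ever created: the initial continuation is a finite term and each $\cmdFork[\cmdVar]$ step moves a strict subterm into the child, so forking can happen only finitely often. Second, a suitable well-founded measure on thread pools (summing the sizes of all continuations, say) is left unchanged by ghost steps and by loop steps but strictly decreased by every other program step (fork, the sequencing/administrative reductions, and thread termination). Since the measure ranges over $\N$, there is an index $T$ after which no such decreasing step occurs; from $T$ on the set of threads is fixed, each surviving thread has a continuation of the form $\cmdLoop[\cmdSkip]; \contVar'$, and---because a fair sequence cannot leave any thread permanently stuck---each such thread performs loop steps forever. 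As the pool is nonempty, at least one thread loops forever, and by the premise of rule \RARedSTLoopName it holds a credit and no obligation at each of its loop steps.

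With this in hand I would turn to the program order graph $\progOrdGraph(\rtpRSeq{\N})$ and pick $N > T$ large enough that every thread has been forked and has entered its loop or terminated by step $T$, taking the prefix $\progOrdGraph_p$ consisting of the nodes of index at most $N$. The key combinatorial claim is that $\progOrdGraph_p$ is genuinely sibling-closed---every fork node's two successors, the next step of the forking thread and the first step of the forked thread, lie at indices $\le T < N$---and that each of its leaves is either a loop step or a thread-termination step. Indeed, fork and sequencing nodes have their successors inside the prefix and are therefore interior, \cmdExit nodes do not occur, so the only leaves are the last loop step $\le N$ of each surviving thread and the termination steps of the halted threads. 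By the premises of \RARedSTLoopName and \RARedTPThreadTermName, every leaf thread holds zero obligations, i.e.\ $\obVar_l = 0$ for all $l \in \leavesOf{\progOrdGraph_p}$. Lemma~\ref{lem:LeafObsEqLeafCreds} then yields $\Sigma_l\, \credVar_l = \Sigma_l\, \obVar_l = 0$, so every leaf thread holds zero credits. But at least one leaf is a loop step, where \RARedSTLoopName requires $\credVar_l \ge 1$---a contradiction. Note this argument never uses $\obVar_0$ or the initial credit count beyond the fact that $(\multiset{\obVar_0}, 0)$ is \completeTerm, as required to invoke Lemma~\ref{lem:LeafObsEqLeafCreds}; it reflects that the only way to discharge an obligation is \cmdExit, which would end the sequence.

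I expect the main obstacle to be the graph bookkeeping in the third step: precisely defining $\progOrdGraph_p$, proving it finite and sibling-closed, and verifying that all of its leaves are zero-obligation (loop or termination) steps. This rests on the two finiteness facts from the second step---only finitely many forks and only finitely many non-loop steps---whose careful formulation (the well-founded measure, and the argument that a fair sequence contains no permanently stuck thread) is the other delicate point. Once these are in place, the appeal to Lemma~\ref{lem:LeafObsEqLeafCreds} and the resulting contradiction are immediate.
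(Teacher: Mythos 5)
Your proof has the same skeleton as the paper's: assume the fair sequence exists, extract a finite sibling-closed prefix of its program order graph, invoke Lemma~\ref{lem:LeafObsEqLeafCreds}, and derive a contradiction because every leaf ought to be obligation-free while at least one leaf is a loop step holding a credit. Your preliminary steps (no \cmdExit step can occur in an infinite sequence; only finitely many forks; beyond some index $T$ every program step is a loop step) are exactly what the paper compresses into ``since \progOrdGraphLF has no loop edges, it is finite'', and they are sound in substance. Two technical slips there are repairable: the plain sum of continuation sizes is \emph{not} strictly decreased by a fork step (node count is conserved --- the body moves to the child, plus a fresh \contDone) nor by the sequencing reassociation step, so you need a weighted count; and a fork node's successors need not lie at indices $\le T$ --- they merely exist at finite indices by fairness, and $N$ must be chosen above all of them.

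The genuine gap is the leaf analysis of your index prefix $\progOrdGraph_p = \setOf{i \le N}$. The program order graph has a node for \emph{every} step of the sequence, and \rtpRedStepSymb is the union of \ngrtpRedStepSymb and \grtpRedStepSymb, so ghost steps are nodes too, and they can occur at any time, in particular after $T$. Your case split (``fork and sequencing nodes are interior, \cmdExit nodes do not occur, hence the only leaves are loop steps and termination steps'') omits them. This matters because ghost steps are precisely what lets a looping thread \emph{transiently} hold an obligation: it may spawn an obligation-credit pair and cancel it a few steps later, and rule \RARedSTLoopName constrains its bundle only at loop steps. If a surviving thread's last node with index $\le N$ falls strictly after a spawn and no later than the matching cancellation, that leaf has $\obVar_l \ge 1$; then ``every leaf holds zero obligations'' is false and the contradiction $\Sigma_l\, \credVar_l = \Sigma_l\, \obVar_l = 0$ evaporates. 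No uniform cutoff repairs this: the hypothetical sequence may interleave spawn/cancel pairs of different threads so that every sufficiently large index cuts some thread mid-pair. The paper's construction avoids the problem by cutting per thread rather than by index: in the \emph{maximal} loop-edge-free sibling-closed prefix, any node whose outgoing edges are not loop edges has its successors inside the prefix (fairness guarantees those successors exist), so ghost nodes can never be leaves, and the leaves are exactly termination nodes and loop nodes.

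Finally, your closing remark that the argument ``never uses the initial credit count'' should have set off alarms rather than been presented as a feature: without the hypothesis that the initial bundle is $(\multiset{\obVar_0}, 0)$ the statement is false --- a single thread holding one credit loops forever, fairly --- so no sound proof can avoid using it. The dependence has to enter through Lemma~\ref{lem:LeafObsEqLeafCreds}, whose base case (the prefix consisting of the root alone) already forces the root's obligations and credits to balance; the paper's phrasing of that lemma, which asks only for completeness of $\rbVar_0$, is too weak for exactly this reason, and the form actually needed is a conservation statement relating $\Sigma_l\, \obVar_l - \Sigma_l\, \credVar_l$ to the initial bundle. The paper's sketch shares this looseness, but your proof inherits it and additionally advertises it, which makes the flaw visible.
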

    
    \begin{proof}[Proof Sketch]
            We prove our claim by contradiction.
            Suppose there exits such a reduction sequence \rtpRSeq{\N}.
            Consider the maximal loop-edge-free sibling-closed prefix \progOrdGraphLF of its program order graph.
            Since the \progOrdGraphLF has no loop edges, it is finite. 
            
            Since the reduction sequence \rtpRSeq{\N} is infinite, at least one of the leaves of the prefix \progOrdGraphLF is the predecessor of a loop edge. 
            This leaf  (i.e., the thread reduced in the leaf) must therefore hold a credit but no obligations. 
            By Lemma~\ref{lem:LeafObsEqLeafCreds}, for any finite prefix of \progOrdGraph, the sum of obligations held by the leaves equals the sum of credits held by them.
            Hence, some other leaf $v_o$ must hold an obligation.
            Every leaf is either 
            (i)~the final step of a thread or 
            (ii)~the predecessor of a loop edge.
            Otherwise, by fairness, \progOrdGraphLF would also include the node's successors.
            In both cases (i) and (ii), the leaf is forbidden to hold any obligations.
            However, leaf $v_o$ does hold an obligation.
            This contradicts $v_o$ being a leaf and, hence, the infinite reduction sequence \rtpRSeq{\N} cannot exist.
    \end{proof}
 

    \SoundnessTheorem*
    \begin{proof}
            By Lemmas~\ref{lem:SoundnessHoareTriples}, \ref{lem:ModelRelationAllowsAnnotation},
            \ref{lem:NoFairAnnotatedReductionSequence}.
    \end{proof}

\NewDocumentCommand{\eventThread}{o}
        {\ensuremath{
                t_e
                \IfValueT{#1}{^{#1}}
        }\xspace}
\NewDocumentCommand{\waitThread}{o}
        {\ensuremath{
                t_w
                \IfValueT{#1}{^{#1}}
        }\xspace}        
    
\section{Future Work}\label{sec:FutureWork}
    We are currently formalizing the presented approach and its soundness proof in Coq.

    \paragraph{Ghost Signals}
    We plan to extend the verification approach described in this paper to a verification technique we call \emph{ghost signals}, which allows us to verify termination of busy-waiting for arbitrary events.
    Ghost signals come with an obligation to set the signal and a credit that can be used to busy-wait for the signal to be set.
    Consider a program with two threads: \eventThread eventually performs some event $X$  (such as setting a flag in shared memory) and \waitThread  busy-waits for $X$.
    By letting \eventThread  set the signal when it performs $X$, and thereby linking the ghost signal to the event, we can justify termination of \waitThread's busy-waiting.

            
    
    \paragraph{I/O Liveness as Abrupt Program Exit}
    In concurrent work we encode I/O liveness properties as abrupt program termination following a conjecture of \citet{Jacobs2018ModularTerminationVerification}.
    Consider a \emph{non-terminating} server \serverVar which shall reply to all requests.
    We can prove liveness of \serverVar using the following methodology:
    \begin{itemize}
        \item For some arbitrary, but fixed $N$, assume that responding to the $N^{\text{th}}$ request abruptly terminates the whole program.
        \item Prove that the program always terminates.
    \end{itemize}

    \begin{figure}
          \begin{center}
                \includegraphics[scale=0.86]{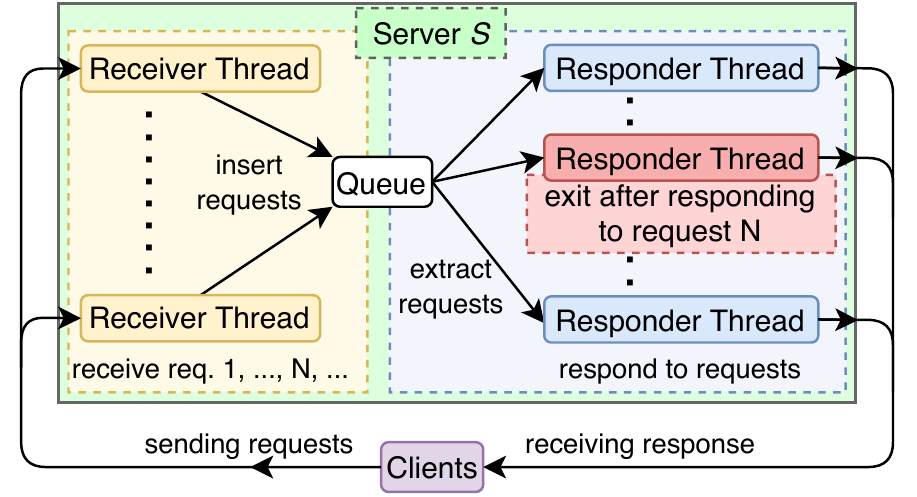}
            \end{center}
        
        \vspace{-0.3cm}
        \caption
            {Server \serverVar receiving and replying to requests.
             Threads communicating via shared queue.}
        \label{fig:ServerMultipleThreads}
        \vspace{-0.4cm}
    \end{figure}

    One can combine this approach with the one of the present paper to verify liveness of a server where multiple threads independently receive and handle requests. 
    Using a \emph{prophecy variable}~\cite{Jung2020POPLProphecyVariables}, one can determine ahead of time which thread will receive the exiting request. 
    The other threads can then be seen as busy-waiting for this thread to exit.

\paragraph{Combination}

    We plan to combine the two approaches sketched above and conjecture that the combination will be expressive enough to verify liveness of programs such as the server \serverVar presented in Figure~\ref{fig:ServerMultipleThreads}.
    It runs a set of receiving and a set of responding threads communicating via a shared queue.
    The responding threads busy-wait for requests to arrive in the queue.
    In order to verify liveness of \serverVar, we need to show that some thread
    eventually abruptly terminates the program by acquiring and responding to the $N^\text{th}$ request.
    This requires us to prove that the responding threads' busy-waiting for requests to arrive in the shared queue terminates.
    
    
    In order to demonstrate the approach's usability, we plan to implement it in VeriFast~\cite{Jacobs2011Verifast} and prove liveness of \serverVar.
    

\section{Related Work}\label{sec:RelatedWork}
\NewDocumentCommand{\TadaLive}{}{TaDA Live\xspace}
\NewDocumentCommand{\Lili}{}{LiLi\xspace}

    \citet{Liang2016LiliAPL, Liang2017LiliProgressOC} propose \Lili, a separation logic to verify liveness of blocking constructs implemented via busy-waiting.
    In contrast to our verification approach, theirs is based on the idea of contextual refinement.
    \Lili does not support forking nor structured parallel composition.

    \citet{DOsualdo2019arxivTaDALive} propose \TadaLive, a separation logic for verifying termination of busy-waiting.
    This logic allows to modularly reason about fine-grained concurrent programs and blocking operations that are implemented in terms of busy-waiting and non-blocking primitives.
    It uses the concept of obligations to express thread-local liveness invariants, e.g., that a thread eventually releases an acquired lock.
    \TadaLive is expressive enough to verify CLH and spin locks.
    The current version supports structured parallel composition instead of unstructured forking.
    Comparing their proof rules to ours, it is fair to say that our logic is simpler. 
    Of course, theirs is much more powerful. 
    We hope to be able to extend ours as sketched above while remaining simpler.

\section{Conclusion}\label{sec:Conclusion}
    In this paper we proposed a separation logic to verify the termination of programs where some threads abruptly terminate the program and others busy-wait for abrupt termination.
    We proved our logic sound and illustrated its application.

    Abrupt termination can be understood as an approximation of a general event.
    We outlined our vision on how to extend our approach to verify the termination of busy-waiting for arbitrary events.
    We have good hope that the final logic will be as expressive as \TadaLive proposed by \citet{DOsualdo2019arxivTaDALive} while remaining conceptually simpler.

    Further, we sketched our vision to combine this extended work with concurrent work on the encoding of I/O liveness properties as abrupt termination.
    We illustrated that this combination will be expressive enough to verify liveness of concurrent programs where multiple threads share I/O responsibilities.

\bibliographystyle{plainnat}
\bibliography{bibliography}

\end{document}